\newcommand{\real}{\mathbb{R}}
\newcommand{\complex}{\mathbb{C}}
\newcommand{\tg}{\widetilde{g}}
\newcommand{\tu}{\widetilde{u}}
\newcommand{\tU}{\widetilde{U}}
\newcommand{\M}[1]{\left({#1}\right)}
\newcommand{\Mb}[1]{\left[{#1}\right]}
\newcommand{\Mcb}[1]{\left\{{#1}\right\}}
\newcommand{\Ma}[1]{\left\langle{#1}\right\rangle}
\newcommand\eq[1] {(\ref{#1})}
\newcommand{\Ga}{\alpha}
\newcommand{\Gb}{\beta}
\newcommand{\Gd}{\delta}
\def\Bc{{\bf c}}
\begin{document}
\title{Mathematical analysis of the two dimensional\\ active exterior cloaking in the quasistatic regime}
\titlerunning{Two dimensional active exterior cloak in quasistatics}
\author{Fernando~Guevara~Vasquez \and Graeme~W.~Milton \and Daniel~Onofrei}
\authorrunning{F. Guevara Vasquez, G. W. Milton and D. Onofrei}
\institute{F. Guevara Vasquez \and G. W. Milton \at 
 Department of Mathematics, University of Utah, Salt Lake
City UT 84112, USA.\\
\email{fguevara@math.utah.edu, milton@math.utah.edu}
\and
 D. Onofrei \at Department of Mathematics, University of Houston, Houston TX 77004,
USA.\\
\email{onofrei@math.uh.edu}
}
\maketitle
\begin{abstract}
 We design a device that generates fields canceling out a known probing
 field inside a region to be cloaked while generating very small fields
 far away from the device. The fields we consider satisfy the Laplace
 equation, but the approach remains valid in the quasistatic regime 
 in a homogeneous medium. We start by relating the problem of
 designing an exterior cloak in the quasistatic regime to the classic
 problem of approximating a harmonic function with harmonic polynomials.
 An explicit polynomial solution to the problem was given earlier in
 [Phys.  Rev. Lett. 103 (2009), 073901]. Here we show convergence of the
 device field to the field
 needed to perfectly cloak an object. The convergence region limits the
 size of the cloaked region, and the size and position
 of the device.
\end{abstract}

\keywords{Cloaking \and Laplace equation \and Quasistatics \and harmonic polynomial
approximation.}
\subclass{
31A05  
\and
35J05  
\and
30E10  
}

\section{Introduction}

Cloaking -- preventing detection of objects from a probing field -- has
been the subject of many recent studies, see e.g. the reviews
\cite{Alu:2008:PMC,Greenleaf:2009:IIP}.  A cloak can be {\em active} or
{\em passive} depending on whether active sources are needed to
maintain the cloak. A cloak is said to be {\em interior} if it
completely surrounds the object to be hidden and {\em exterior}
otherwise.

One approach to obtain {\em passive interior} cloaks is to exploit the
invariance of the governing equations (e.g. Laplace, Helmholtz, Maxwell
equations, $\ldots$) to coordinate transformations.  This approach was
introduced in
\cite{Greenleaf:2003:ACC,Pendry:2006:CEM,Leonhardt:2006:NCI,Leonhardt:2006:OCM,Chen:2007:ACT,Greenleaf:2009:IIP}
(see also references in \cite{Alu:2008:PMC,Greenleaf:2009:IIP}) and is based on
ideas first observed in \cite{Dolin:1961:PCT}. Although transformation based
cloaking is set on solid mathematical grounds and has been demonstrated
experimentally in a variety of physical settings, the cloaks generated with
this approach require materials with extreme properties that are usually
approximated using specially designed metamaterials. Unfortunately
metamaterials used in electromagnetic transformation based cloaking are typically very
dispersive, meaning that the cloak operates only in a narrow band of
frequencies. Also losses in the cloak material generate heat that can make the
object detectable using infrared. Some recent results in generating broadband
low-loss metamaterials have been obtained in \cite{Smolyaninov:2009:AME}. In an
effort to overcome the shortcomings of transformation based cloaks, various
regularizations have been proposed (see \cite{Kohn:2010:CCV} and references
therein). 

Other {\em passive interior} cloaking methods include  plasmonic
cloaking (see \cite{Alu:2008:PMC} and references therein).  Cloaking
methods that are {\em passive} and {\em exterior} include cloaking with
complementary media \cite{Lai:2009:CMI}, cloaking by anomalous
resonances \cite{Milton:2006:CEA,Nicorovici:2007:OCT,Milton:2008:SFG}
and plasmonic cloaking \cite{Silvereinha:2008:CMA}.

An example of an {\em active interior} cloak appears in
\cite{Miller:2007:PC} and uses sources continuously distributed over a
closed surface surrounding the cloaked region in order to cancel out the
incident field inside the cloaked region.

Here we focus on an {\em active exterior} cloak for the 2D Laplace
equation \cite{Guevara:2009:AEC}, which can be easily adapted to 2D
quasistatics in a homogeneous medium. This scheme assumes the incident
or probing field is known and uses one active source (cloaking device)
to cancel the incident field in the cloaked region with no significant
perturbation in the far field. Thus an object inside the cloaked region
interacts very little with the probing field and becomes harder to
detect. Active exterior cloaking has been extended to the 2D Helmholtz
equation in \cite{Guevara:2009:BEC,Guevara:2011:ECA} and to the 3D
Helmholtz equation in \cite{Guevara:2011:TEA}. Our approach assumes a
homogeneous background medium and requires three (resp. four) devices or
antennas to construct a cloak for the 2D (resp.  3D) Helmholtz equation.

Our goal here is to rigorously justify the quasistatic cloaking method
of \cite{Guevara:2009:AEC}. Quasistatics refers to the Maxwell or
Helmholtz equations in the long wavelength limit, where the governing
equation is the Laplace equation. We start by describing the cloak setup
in Section~\ref{sec:setup}. Then in Section~\ref{sec:existence},  we
prove the existence of a solution for the 2D quasistatic active exterior
cloak, based on a classic harmonic approximation result due to Walsh
(see e.g. \cite{Gardiner:1995:HA}). Unfortunately the existence proof is
not constructive. We proposed a candidate constructive solution without proof in
\cite{Guevara:2009:AEC}, supported by numerical experiments. In
Section~\ref{sec:const} we give the arguments behind this solution and
prove that it does indeed solve the active exterior cloaking problem.

\section{Cloak setup and device requirements}
\label{sec:setup}
Three regions in $\real^2$ are needed to describe our cloak setup: the
region to be cloaked, the cloaking device, and the observation region.
See Figure~\ref{fig:kelvin} (left) for an example setup.  The main idea
of our cloaking method is to cancel out an (assumed known) incident
field $u_0$ inside the cloaked region while perturbing the far field
only slightly. Thus the total field inside the
cloaked region is practically zero and the scattered field from any
objects inside the cloaked region is reduced significantly. 

Here we consider the conductivity equation with conductivity one and a
harmonic incident field $u_0$ (i.e. $\Delta u_0 =0$). Without loss of
generality, we take as cloaked region the disk $B(\Bc,a) \subset
\real^2$, centered at $\Bc = (p,0) \in \real^2$, $p>0$, and with radius
$a>0$.  As in \cite{Guevara:2009:AEC}, we consider one cloaking device located
inside $B(0,\Gd)$, with $\Gd \ll 1$. The device generates a field $u$,
harmonic outside $B(0,\Gd)$. In order to cloak objects the device field
$u$ needs to satisfy the following requirements.  
\begin{enumerate}
  \item The total field $u+u_0$ in the cloaked region $B(\Bc,a)$ is very
  small.
  \item The device field $u$ is very small far away from the device,
  e.g. in the observation region $\real^2 \setminus B(0,R)$ for a large
   $R>0$.
 \end{enumerate}
In order for the device to be {\em exterior} to the cloaked region, we
must have
\begin{equation}
 p > a + \Gd.
 \label{eq:const1}
\end{equation}
Also the observation radius $R$ needs to be large enough to contain both
the device and the cloaked region:
\begin{equation}
 R>a+p.
 \label{eq:const2}
\end{equation}

\begin{figure}
 \begin{center}
  \includegraphics[width=0.9\textwidth]{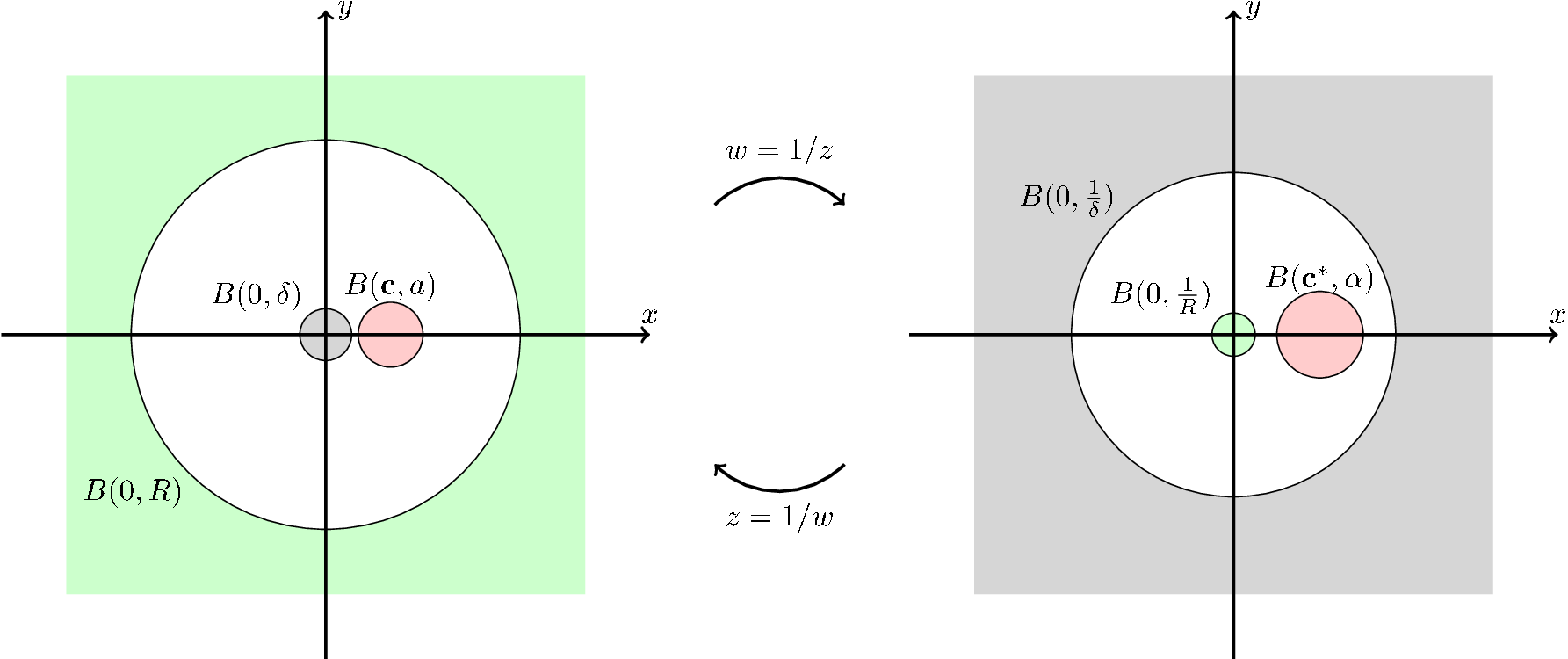}
 \end{center}
 \caption{The effect of the inversion (Kelvin) transform $w=1/z$ on the cloak
 geometry. The cloaked region is in red and the device sources are all
 contained in the gray disk. The green region is the observation region,
 where the device field must be very small to avoid detection.}
 \label{fig:kelvin}
\end{figure}

\section{Cloak existence}
\label{sec:existence}
The existence of a device field $u$ having the desired cloaking
properties to within a tolerance $\epsilon$ is stated in the next
theorem. 
\begin{theorem}
\label{thm-1}
Let $\epsilon>0$  be an arbitrarily small parameter. Let also $a>0$,
$\Bc=(p,0)$, $p>0$ and $R$ satisfy the inequalities \eqref{eq:const1} and
\eqref{eq:const2}.  Then for a harmonic incident field $u_0$, there are
functions $g_0:\real^2\to \real$ and $u:\real^2\rightarrow \real$ such that
 \begin{equation}
 \begin{aligned}
  &\left\{
  \begin{aligned}
   \Delta u &= 0, ~\text{in $\real^2 \setminus \overline{B(0,\delta)}$},\\
   u & = g_0, ~\text{on $\partial B(0,\delta)$},\\
  \end{aligned}
  \right.\\
  &\text{with $|u|  < \epsilon$ in $\real^2 \setminus B(0,R)$ and $|u+u_0| <
 \epsilon$ in $B(\Bc,a)$.}
 \end{aligned}
 \label{eq:clk}
 \end{equation}
\end{theorem}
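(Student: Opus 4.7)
The plan is to reduce the problem, via the Kelvin/inversion map $w = 1/z$ suggested by Figure~\ref{fig:kelvin}, to a classical harmonic approximation problem on compact sets, which can then be solved using Walsh's theorem.

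First I would apply the inversion $w=1/z$. Because this map is conformal on $\complex^*$, it sends harmonic functions to harmonic functions. Under it, the exterior of the device $\real^2\setminus \overline{B(0,\delta)}$ becomes the punctured disk $B(0,1/\delta)\setminus\{0\}$ (the point $w=0$ corresponding to $z=\infty$); the cloaked disk $B(\Bc,a)$ becomes another disk $D_1$, and the hypothesis $p>a+\delta$ in \eqref{eq:const1} ensures $\overline{D_1}\subset B(0,1/\delta)$ and $0\notin\overline{D_1}$; the observation region $\real^2\setminus B(0,R)$ becomes $B(0,1/R)\setminus\{0\}$, whose closure $\overline{B(0,1/R)}$ also lies inside $B(0,1/\delta)$ by \eqref{eq:const2}. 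The two closed target sets are disjoint because $R>a+p$ guarantees $B(\Bc,a)\cap(\real^2\setminus B(0,R))=\emptyset$.

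Next I would reformulate the desired inequalities in the $w$-variable. Setting $U(w)=u(1/w)$ and $U_0(w)=u_0(1/w)$, the task becomes: find $U$ harmonic on $B(0,1/\delta)\setminus\{0\}$ such that $|U|<\epsilon$ on $\overline{B(0,1/R)}$ and $|U+U_0|<\epsilon$ on $\overline{D_1}$. On the compact set
\[
K = \overline{B(0,1/R)} \cup \overline{D_1},
\]
define the target function $f$ by $f\equiv 0$ on $\overline{B(0,1/R)}$ and $f=-U_0$ on $\overline{D_1}$. Then $f$ is continuous on $K$ and harmonic in the interior of $K$: on the first component this is trivial, and on the second it follows because $u_0$ is harmonic on $\real^2$ while $\overline{D_1}$ avoids the singular point $w=0$ of the inversion.

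Now I would invoke Walsh's harmonic approximation theorem (see \cite{Gardiner:1995:HA}). The complement of $K$ in $\complex$ is connected, since $K$ is the disjoint union of two closed disks, so the hypotheses are met and there exists a harmonic polynomial $P$ on $\real^2$ with $\|P-f\|_{L^\infty(K)}<\epsilon$. Pulling back, set $u(z):=P(1/z)$ for $z\neq 0$ and $g_0:=u|_{\partial B(0,\delta)}$. Since $P$ is harmonic on $\real^2$ and $1/z$ is conformal on $\complex^*$, $u$ is harmonic on $\real^2\setminus\overline{B(0,\delta)}$. For $|z|\ge R$ we have $1/z\in \overline{B(0,1/R)}\subset K$, hence $|u(z)|=|P(1/z)-0|<\epsilon$ (the limit value $P(0)$ as $z\to\infty$ is controlled the same way). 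For $z\in B(\Bc,a)$ we have $1/z\in D_1\subset K$, hence $|u(z)+u_0(z)|=|P(1/z)-f(1/z)|<\epsilon$. This yields \eqref{eq:clk}.

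The main conceptual obstacle is recognizing that the natural formulation of the cloaking conditions as separate constraints on two disjoint sets, combined with the unbounded observation region, is exactly the setting where a Runge/Walsh-type theorem gives existence; after that, the only technical points are checking that the inverted geometry satisfies the topological hypothesis (connected complement of $K$) and that $f$ extends harmonically across a neighborhood of each component of $K$. The construction is, as the authors note, nonconstructive -- the explicit polynomial realization is deferred to Section~\ref{sec:const}.
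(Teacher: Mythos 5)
Your proposal is correct and follows essentially the same route as the paper: Kelvin inversion of the geometry, then Walsh's theorem applied to the compact set $\overline{B(0,1/R)}\cup\overline{D_1}$ with the piecewise target ($0$ on one disk, minus the inverted incident field on the other), pulled back via $w=1/z$. The only cosmetic difference is that you apply Walsh directly to $-U_0$ (valid since $U_0$ is harmonic near $\overline{D_1}$, which avoids $w=0$), whereas the paper first replaces $\tu_0$ by a polynomial approximant $q_0$ before invoking Walsh -- an inessential intermediate step.
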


The main idea of the proof of Theorem~\ref{thm-1} is to relate active
exterior cloaking to the problem of approximating harmonic functions
with harmonic polynomials. We rely on the following classic result.
\begin{lemma}[Walsh, see e.g. \cite{Gardiner:1995:HA}, page 8]
 \label{lem:walsh}
 Let $K$ be a compact set in $\real^2$ such that $\real^2 \setminus K$
 is connected. Then for each function $w$ harmonic on an open set
 containing $K$ and for any $\epsilon >0$, there is a harmonic
 polynomial $q$ for which $|w-q| < \epsilon$ on $K$.
\end{lemma}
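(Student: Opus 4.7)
My plan is to reduce the statement to Runge's polynomial approximation theorem for holomorphic functions, combined with the standard decomposition of a harmonic function on a multiply connected domain into a holomorphic real part plus a sum of logarithmic potentials. The four steps are: enclose $K$ in a well-behaved open set $V \subset U$, decompose $w$ on $V$, approximate the holomorphic part by polynomials via Runge, and approximate each logarithmic term by harmonic polynomials via a pole-pushing argument. The hypothesis that $\complex \setminus K$ is connected enters the last two steps in a crucial way.

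Concretely, let $U$ be the open set on which $w$ is harmonic. First I would choose a bounded open $V$ with smooth boundary consisting of finitely many disjoint Jordan curves, satisfying $K \subset V \subset \overline{V} \subset U$, which exists by smoothing a fine polygonal neighborhood of $K$ inside $U$. Because $\partial V$ has finitely many components, $\complex \setminus V$ has finitely many bounded components; I pick one point $a_j$ in each, $j = 1, \dots, N$. On $V$, potential theory yields the decomposition
\begin{equation*}
 w(z) = \Re G(z) + \sum_{j=1}^{N} c_{j} \log|z - a_{j}|,
\end{equation*}
where the real coefficients $c_j$ are $(2\pi)^{-1}$ times the periods of the conjugate differential $-w_y\,dx + w_x\,dy$ around loops in $V$ encircling $a_j$ alone, and $G$ is holomorphic on $V$; the logarithmic terms are chosen precisely to cancel every period, so the remainder has a single-valued harmonic conjugate on $V$ and is the real part of a holomorphic function. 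Runge's theorem applied to $G$, which is holomorphic on the neighborhood $V$ of $K$ with $\complex \setminus K$ connected, then produces a polynomial $P$ in $z$ with $|G - P| < \epsilon/(2(N+1))$ on $K$, and $\Re P$ is a harmonic polynomial approximating $\Re G$.

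For each $j$ it remains to approximate $\log|z - a_j|$ uniformly on $K$ by a harmonic polynomial within $\epsilon/(2(N+1)|c_j|)$. When $|a| > \max_{z \in K}|z|$, the identity $\log|z - a| = \log|a| + \Re \log(1 - z/a)$ yields a uniformly convergent series on $K$ whose partial sums are harmonic polynomials. For a general $a_j \in \complex \setminus K$, the connectedness of $\complex \setminus K$ provides a path in $\complex \setminus K$ joining $a_j$ to infinity, and since $\log|z - a|$ is jointly continuous in $(z, a)$, uniformly so for $z \in K$ and $a$ at a positive distance from $K$, a finite chain of small perturbations along the path reduces the approximation task to the large-$|a|$ case. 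Combining all approximations via the triangle inequality delivers the desired harmonic polynomial $q$ with $|w - q| < \epsilon$ on $K$. I expect the pole-pushing step to be the main technical obstacle: it is exactly where the assumption that $\complex \setminus K$ is connected is indispensable, since otherwise some $a_j$ could be trapped in a bounded component of $\complex \setminus K$ with no path to infinity, and the associated logarithmic singularity could not be eliminated by polynomial (only rational) approximation.
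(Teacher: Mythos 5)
You should first note that the paper itself does not prove this lemma: it is quoted as a classical theorem of Walsh with a citation to Gardiner's book, so there is no in-paper argument to compare against. Your overall strategy is a standard and legitimate route to this ``harmonic Runge'' theorem: enclose $K$ in a smoothly bounded, finitely connected $V$ with $K\subset V\subset\overline V\subset U$, invoke the logarithmic conjugation theorem to write $w=\Re G+\sum_j c_j\log|z-a_j|$ with $G$ holomorphic on $V$ and $a_j$ in the bounded components of $\complex\setminus V$, approximate $G$ on $K$ by Runge's theorem, and then deal with the logarithmic terms using the connectedness of $\complex\setminus K$. Steps one through three are correct (up to the harmless remark that terms with $c_j=0$ should simply be dropped rather than divided by).

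The genuine gap is in the step you yourself flag as the main obstacle: the pole-pushing for $\log|z-a_j|$ cannot be justified by joint continuity alone. Uniform continuity in the pole position only shows that consecutive functions along the chain differ by $O(|a_m-a_{m+1}|/\mathrm{dist}(a_{m+1},K))$ in $C(K)$; these per-step errors do not disappear, they accumulate to roughly the length of the path divided by the distance to $K$, and indeed $\log|z-a_j|$ and $\log|z-b|$ with $|b|$ huge differ on $K$ by about $\log|b|$, so no finite chain of small perturbations can bring you within a prescribed $\epsilon$ of the large-$|a|$ case. Equivalently, continuity shows that the set of poles $a$ for which $\log|z-a|$ is uniformly approximable by harmonic polynomials on $K$ is closed in $\complex\setminus K$, whereas the chain argument needs it to be open, which continuity does not give. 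The standard repair is to approximate the \emph{difference}, not to bound it: pick $b$ with $|b|>2\sup_{z\in K}|z|$ and a path $\gamma\subset\complex\setminus K$ from $a_j$ to $b$ (this is where connectedness enters), and write
\begin{equation*}
\log|z-a_j|=\log|z-b|+\Re\,\mathrm{Log}\,\frac{z-a_j}{z-b},
\end{equation*}
where a single-valued holomorphic branch of $\mathrm{Log}\,\frac{z-a_j}{z-b}$ exists on $\complex\setminus\gamma$, an open set containing $K$, because every cycle in $\complex\setminus\gamma$ has equal winding numbers about $a_j$ and $b$; then apply Runge's theorem to this branch on $K$ and your large-$|b|$ series to $\log|z-b|$, and take real parts. (Alternatively, telescope along a chain with $|a_m-a_{m+1}|<\mathrm{dist}(a_{m+1},K)$ and expand $\Re\,\mathrm{Log}\bigl(1-\frac{a_m-a_{m+1}}{z-a_{m+1}}\bigr)$ in its uniformly convergent series, followed by classical rational-to-polynomial pole pushing.) With this replacement the rest of your argument goes through.
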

We  can now proceed with the proof of Theorem~\ref{thm-1}.
\begin{proof}
It is convenient to use complex numbers $z = x + iy$ to represent points
$(x,y) \in \real^2$.
By applying the inversion (Kelvin) transformation $w = 
1/z$, the geometry of the problem transforms as in Table~\ref{tab:kelvin}.
(see also Figure~\ref{fig:kelvin}).

\begin{table}
 \begin{center}
 \renewcommand{\arraystretch}{1.2}
 \begin{tabular}{l|l|p{18em}}
  Region & $z$ plane & $w=1/z$ plane\\\hline
  Cloaking device &  $B(0,\delta)$ & $\real^2 \setminus
 B(0,1/\delta)$\\\hline
  Cloaked region & $B(\Bc,a)$ & $B(\Bc^*,\alpha)$ with
 $\Bc^* = (\beta,0)$, $\alpha = a/|p^2-a^2|$ and $\beta = p/(p^2-a^2)$\\\hline
  Observation region & $\real^2 \setminus B(0,R)$ & $B(0,1/R)$\\
 \end{tabular}
 \end{center}
 \caption{The different regions in our cloak setup and how they are mapped by
 the inversion (Kelvin) transformation.}
 \label{tab:kelvin}
\end{table}

Thus the cloaking problem \eqref{eq:clk} is equivalent to
finding functions $\tg_0$ and $\tu$ for which
\begin{equation}
 \begin{aligned}
 &\left\{
 \begin{aligned}
 \Delta \tu &= 0, ~\text{in $B(0,1/\delta)$},\\
 \tu &= \tg_0, ~\text{on $\partial B(0,1/\delta)$},
 \end{aligned}
 \right.\\
 &\text{with $|\tu| < \epsilon$ on $\overline{B(0,1/R)}$ and
$|\tu+\tu_0|<\epsilon$ on $B(\Bc^*,\alpha)$.}
 \end{aligned}
 \label{eq:kclk}
\end{equation}
 Here $\epsilon$ is as
in the statement of the theorem, $\tg_0(z) = g_0(1/z)$ and the function
$\tu_0(z) = u_0(1/z)$ is harmonic on $\real^2 \setminus \{0\}$.

Let $\tU_0$ denote the analytic extension of $\tu_0$
in $B(\Bc^*,\alpha)$, obtained by adding $i$ times its harmonic
conjugate. Notice that since $\tU_0$ is analytic, it can be arbitrarily
well approximated by a polynomial, e.g. a truncation of the power
series of $\tU_0$. Therefore, there is a polynomial $Q_0$ such that
\begin{equation}
 |\tU_0 - Q_0| < \epsilon /2, ~~\text{on $\overline{B(\Bc^*,\alpha)}$}.
 \label{3'''}
\end{equation}
 For $\tu_0$ this means that
\begin{equation}
 |\tu_0 - q_0| < \epsilon/2, ~~\text{on $\overline{B(\Bc^*,\alpha)}$},
 \label{3''}
\end{equation}
where $q_0$ is the real part of $Q_0$. Thus we may solve \eqref{eq:kclk} by
first approximating the (inverted) incident field $\tu_0$ by $q_0$ and then
studying the following problem
\begin{equation}
 \begin{aligned}
 &\left\{
 \begin{aligned}
 \Delta \tu &= 0, ~\text{in $B(0,1/\delta)$},\\
 \tu &= \tg_0, ~\text{on $\partial B(0,1/\delta)$},
 \end{aligned}
 \right.\\
 &\text{with $|\tu| < \epsilon$ on $\overline{B(0,1/R)}$ and
$|\tu+q_0|<\epsilon/2$ on $B(\Bc^*,\alpha)$.}
 \end{aligned}
 \label{eq:kclkpoly}
\end{equation}

After inversion, the conditions \eqref{eq:const1} and \eqref{eq:const2}
necessary for having an exterior cloak become
\begin{equation}
 \begin{aligned}
  1/R&<\Gb-\Ga, ~~ \text{(the two disks
$B(0,1/R)$ and $B(\Bc^*,\alpha)$ do not touch), and
 }\\
 \Gb+\Ga&<1/\Gd,~~ \text{(the two disks $B(0,1/\delta)$ and
 $B(\Bc^*,\alpha)$ do not touch).}
 \label{0'}
 \end{aligned}
\end{equation}
Therefore, there exists $0<\xi\ll 1$ such that 
\begin{equation}
\frac{1}{R}+\xi<\Gb-\Ga-\xi.
\label{6}
\end{equation} 
We can now apply Lemma~\ref{lem:walsh} to the compact set
$K=\overline{B(0,1/R)} \cup \overline{B(\Bc^*,\alpha)}$ (which has a connected complement by virtue of \eqref{0'}) and the function 
\begin{equation}
 w = \begin{cases}
  0 & \text{in $B(0,\frac{1}{R} + \xi)$,}\\
  -q_0 & \text{in $B(\Bc^*,\alpha+\xi)$,}
  \end{cases}
\end{equation}
which is a harmonic function in the open set $B(0,\frac{1}{R} + \xi)
\cup B(\Bc^*,\alpha+\xi)$ (a set containing $K$). We obtain that there
exists a harmonic polynomial $q$ such that $|q-w| < \epsilon/2$ on $K$.
A solution to \eqref{eq:kclkpoly} is then given by $\tu = q$ and $\tg_0
= q$ on $\partial B(0,1/\delta)$. This implies the statement of the
theorem.
\end{proof}

\begin{remark} 
We assumed throughout this section that the incident field $u_0$ is
harmonic on $\real^2$. This corresponds to a source located at infinity.
Recall our method relies on approximating the Kelvin transformed
analytic extension of the incident field $\tU_0$ inside the Kelvin
transformed cloaked region $B(\Bc^*,\alpha)$ by a polynomial $Q_0$ (see
\eqref{eq:approx1}). This approximation only requires analyticity of
$U_0$ inside the cloaked region $B(\Bc,a)$.  Hence the results of this
section and the construction of Section~\ref{sec:const} below generalize
easily to the case where the incident field $u_0$ is harmonic inside the
observation region $B(0,R)$. This is the case where the sources
generating the incident field are outside the observation region but not
necessarily located at infinity.
\end{remark}

\begin{remark}
Clearly, Theorem~\ref{thm-1} also holds when the device and cloaked
region are not necessarily disks. The only requirements are that they be
bounded, disjoint and that the complement of their union be connected (see
Lemma~\ref{lem:walsh}).
\end{remark}

\section{A constructive solution for active cloaking}
\label{sec:const}

Although mathematically rigorous, the existence result of Theorem
\ref{thm-1} does not give an explicit expression for the potential
required at the active device (antenna). To give an explicit harmonic
solution to problem \eq{eq:clk}, we first simplify the problem in
Section~\ref{sec:simp}. Then we give a candidate solution to the
simplified problem in Section~\ref{sec:lag}, in the form of a Lagrange
interpolation polynomial. A better solution is constructed in
Section~\ref{sec:her} by averaging several Lagrange interpolation
polynomials. The resulting polynomial turns out to be a Hermite
interpolation polynomial. Then in Section~\ref{sec:conv} we show that
this Hermite interpolation polynomial solves \eqref{eq:kclk} (and thus
the cloaking problem \eqref{eq:clk}) provided its degree is sufficiently
large. This convergence study reveals constraints on the size of the cloaked
region and the device that are due to the particular solution we
construct.

\subsection{Simplifying the problem}
\label{sec:simp}
In the proof of Theorem~\ref{thm-1}, we related the cloaking problem
\eqref{eq:clk} to the problem of approximating a polynomial $Q_0$ with
an analytic function $V$ such that for some $\epsilon>0$,
\begin{equation}
 |V| < \epsilon ~\text{in $B(0,1/R)$ and}~ |V+Q_0|<\epsilon ~\text{in
 $B(\Bc^*,\alpha)$}.
 \label{eq:approx1}
\end{equation}
Now consider the problem of finding an analytic function $W$ such that
for some $\epsilon'>0$,
\begin{equation}
 |1-W| < \epsilon' ~\text{in $B(0,1/R)$ and}~ |W|< \epsilon'~\text{in
 $B(\Bc^*,\alpha)$}.
 \label{eq:approx2}
\end{equation}
Assuming we can find an approximant $W$ in \eqref{eq:approx2} with
$\epsilon' = \epsilon/M$ and 
\begin{equation}
 M = \sup_{z \in B(\Bc^*,\alpha) \cup B(0,1/R)} |Q_0(z)|,
\end{equation}
a solution to \eqref{eq:approx1} is then $V=-Q_0 (1-W)$, which is analytic
because the product of two analytic functions is analytic.

For illustration purposes we fast forward to Figure~\ref{fig:ensavg},
where we give an example of a function $W$ with the approximation
properties \eqref{eq:approx2}. The function $W$ is a polynomial whose
motivation, derivation and analysis are the subject of the remainder of
this section.

In order to use such a function $W$ for cloaking, assume $Q_0(1/z)$ is
the harmonic incident field. Then the device field needed for solving the
cloaking problem \eqref{eq:clk} is the real part of the function $U(1/z)
= -Q_0(1/z) (1-W(1/z))$ (after having undone the Kelvin transformation
we used for the analysis). The actual device field is illustrated in
Figure~\ref{fig:cloak}. On the left, a scatterer perturbs the incident
field and can be easily detected. On the right, the device field (based
on the function $W$ of Figure~\ref{fig:ensavg} is activated and
suppresses the incident field inside the cloaked region, making the
object undetectable for all practical purposes.

\subsection{A first candidate polynomial from Lagrange interpolation}
\label{sec:lag}

\begin{figure}[ht]
 \begin{center}
  \includegraphics[height=0.15\textheight]{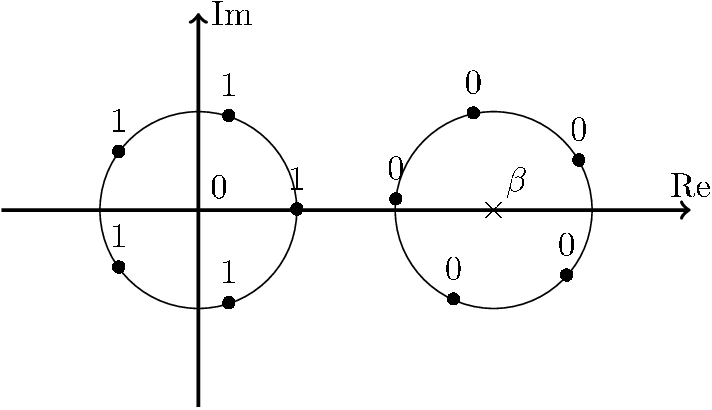}
  \hspace{1em}
  \includegraphics[height=0.15\textheight]{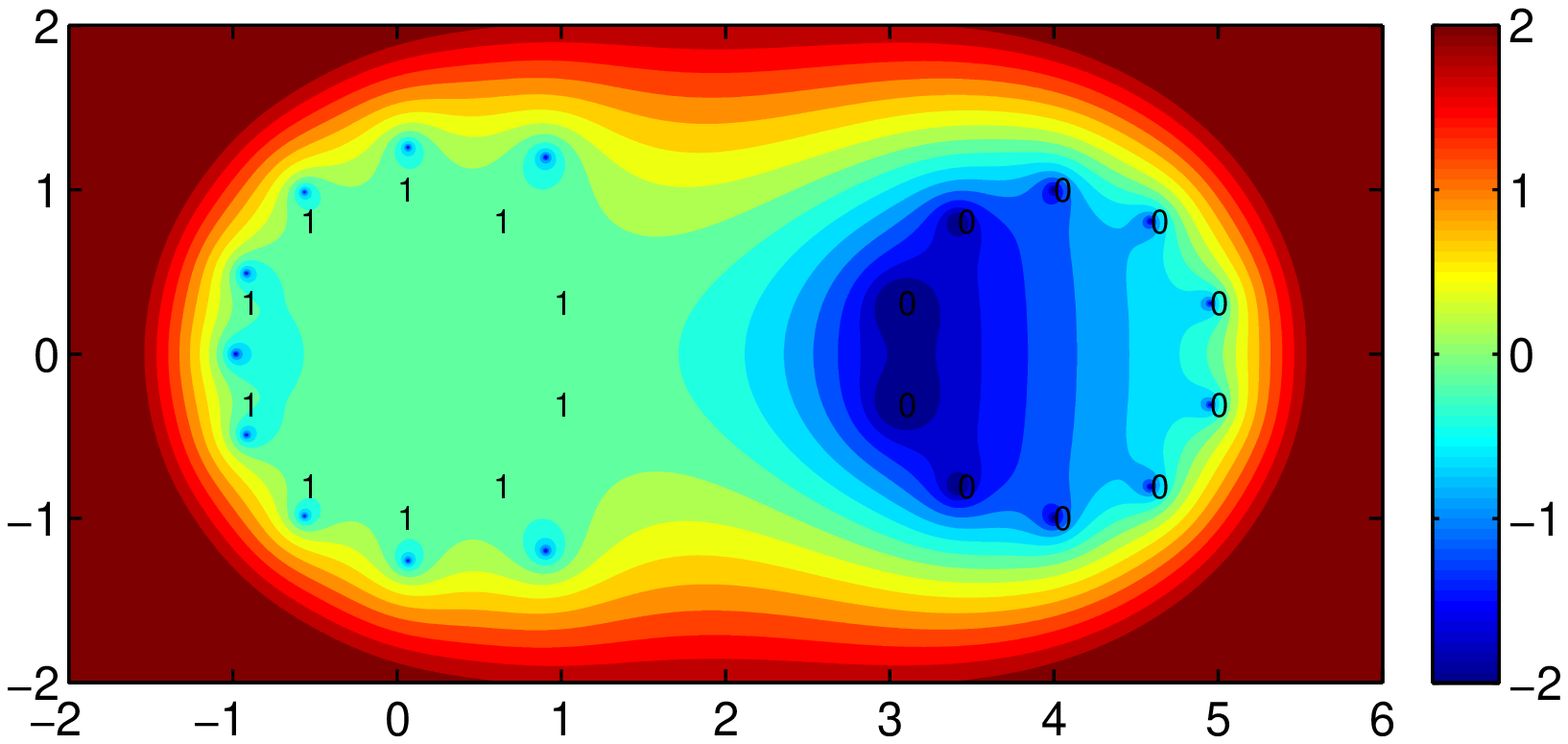}
 \end{center}
 \caption{Left: sample interpolation points for the interpolation polynomial
 $p_{\phi,\psi}$ with $n=5$, $\phi=0$ and $\psi=\pi/3$. Right: 
 the modulus of the polynomial $p_{\phi,\psi}$ with $n=10$, $\phi=-\psi=\pi/10$ and
 $\beta=4$. The color scale is logarithmic and the interpolation nodes are
 indicated by the interpolation values.}
 \label{fig:pphipsi}
\end{figure}

We present a polynomial solution to \eqref{eq:approx2} based on Lagrange
interpolation.  This is an intermediary step to motivate the explicit
solution to \eqref{eq:approx2} given later in Section~\ref{sec:her}. The
idea applies only to the case where $\alpha=R=1$ and $\beta =
p/(p^2-a^2) > 2$. The candidate solution is a polynomial that is one at
$n$ equally distributed points on $\partial B(0,1)$ and zero at $n$
equally distributed points on $\partial B(\Bc^*,1)$.  The motivation
being that by surrounding both $0$ and $\Bc^* = (\beta,0)$ by $n$ points
where the polynomial has the desired values, we hope to get close to a
polynomial satisfying \eqref{eq:approx2}. 

To be more precise, let us introduce the following family of $2n$ nodes
$\{e^{i\phi} w_j, \beta + e^{i\psi}w_j \}_{j=0}^{n-1}$. Here $\phi$ and
$\psi$ are two arbitrary angles and $w_j = \exp[2i\pi j/n]$, for
$j=0,\ldots,n-1$. Define the polynomial $p_{\phi,\psi}$ as the unique
polynomial of degree $2n-1$ satisfying,
\begin{equation}
 p_{\phi,\psi}(e^{i\phi} w_j) = 1
 ~\text{and}~
 p_{\phi,\psi}(\beta + e^{i\psi}w_j) = 0,
 ~\text{for $j = 0,\ldots,n-1$.}
\label{eq:interp1}
\end{equation}
An example of the interpolation nodes and the values of $p_{\phi,\psi}$
is shown in Figure~\ref{fig:pphipsi}(left).

The polynomial $p_{\phi,\psi}$ is unique and can be written explicitly
as
\begin{equation}
 p_{\phi,\psi}(z) = \sum_{m=0}^{n-1} q_{\phi,\psi,m}(z),
 \label{eq:pqsum}
\end{equation}
where $q_{\phi,\psi,m}(z)$ are Lagrange interpolation polynomials (see e.g.
\cite{Stoer:2002:INA}) defined for $m=0,\ldots,n-1$ by
\begin{equation}
 q_{\phi,\psi,m}(z) = 
 \Mb{\prod_{j=0,j\neq m}^{n-1} \frac{z-e^{i\phi}w_j}{e^{i\phi}w_m-e^{i\phi}w_j}}
 \Mb{\prod_{j=0}^{n-1}
 \frac{z-(\beta+e^{i\psi}w_j)}{e^{i\phi}w_m-(\beta+e^{i\psi}w_j)}},
\end{equation}
or alternatively by their interpolation properties
\begin{equation}
 q_{\phi,\psi,m}(e^{i\phi}w_j) = \delta_{mj},
 ~\text{and}~
 q_{\phi,\psi,m}(\beta+e^{i\psi}w_j) = 0,
 ~\text{for $j=0,\ldots,n-1$}.
\end{equation}
Here $\delta_{mj} = 1$ if $m=j$ and $0$ otherwise is the Kronecker
delta. Straightforward calculations give the expression
\begin{equation}
 q_{\phi,\psi,m}(z) = 
 \Mb{
 \frac{(z-\beta)^n - e^{i\psi n}}{(e^{i\phi}w_m -
 \beta)^n - e^{i\psi n}}
 }
 \Mb{
 \frac{z^n - e^{i\phi n}}{z-e^{i\phi} w_m} 
 }
 \Mb{
 \frac{1}{n(w_m e^{i\phi})^{n-1}}
 },
 \label{eq:q}
\end{equation}
which will be used later in Section~\ref{sec:her}. 

We state the following symmetry property of the polynomial 
$p_{\phi,\psi}$ for later use.
\begin{lemma}\label{lem:symm}
For any angles $\phi$ and $\psi$, the polynomial $p_{\phi,\psi}$ has the
following symmetry property: 
\begin{equation}
 p_{\phi,\psi}(z) + p_{\psi+\pi,\phi+\pi}(\beta - z) = 1.
 \label{12'}
\end{equation}
\end{lemma}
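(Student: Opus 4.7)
The plan is to use the unisolvence of the interpolation problem that defines $p_{\phi,\psi}$. The map $z \mapsto \beta - z$ is an involution and is affine of degree one, so $p_{\psi+\pi,\phi+\pi}(\beta - z)$ is again a polynomial in $z$ of degree at most $2n-1$. Consequently
\begin{equation}
 r(z) := p_{\phi,\psi}(z) + p_{\psi+\pi,\phi+\pi}(\beta - z)
\end{equation}
is a polynomial of degree at most $2n-1$, and it suffices to show that $r$ takes the value $1$ at the $2n$ distinct nodes
\begin{equation}
 \{e^{i\phi} w_j\}_{j=0}^{n-1} \cup \{\beta + e^{i\psi} w_j\}_{j=0}^{n-1},
\end{equation}
which by the condition $\beta > 2$ lie in two disjoint disks and are therefore distinct; then $r-1$ has $2n$ zeros but degree at most $2n-1$, forcing $r \equiv 1$.

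So first I would check what $p_{\psi+\pi,\phi+\pi}$ does. By definition it equals $1$ at the nodes $e^{i(\psi+\pi)}w_j = -e^{i\psi}w_j$ and $0$ at $\beta + e^{i(\phi+\pi)}w_j = \beta - e^{i\phi}w_j$. Consequently $p_{\psi+\pi,\phi+\pi}(\beta - z)$ equals $1$ precisely when $\beta - z = -e^{i\psi}w_j$, i.e. $z = \beta + e^{i\psi}w_j$, and equals $0$ precisely when $\beta - z = \beta - e^{i\phi}w_j$, i.e. $z = e^{i\phi}w_j$.

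Next I combine this with the defining properties of $p_{\phi,\psi}$ in \eqref{eq:interp1}. At $z = e^{i\phi}w_j$ one gets $r(z) = 1 + 0 = 1$, and at $z = \beta + e^{i\psi}w_j$ one gets $r(z) = 0 + 1 = 1$. This gives the $2n$ interpolation values needed for the argument outlined above, and \eqref{12'} follows.

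There is no real obstacle here; the only point to be careful about is the degree count (the substitution $z \mapsto \beta - z$ preserves degree) and the fact that the $2n$ interpolation nodes are genuinely distinct, which was already used implicitly in asserting the uniqueness of $p_{\phi,\psi}$ and which is guaranteed by the geometric hypothesis $\beta > 2$.
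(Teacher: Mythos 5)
Your proof is correct and follows essentially the same route as the paper: evaluate $p_{\phi,\psi}(z) + p_{\psi+\pi,\phi+\pi}(\beta-z) - 1$ at the $2n$ interpolation nodes and use the degree bound $2n-1$ to conclude it vanishes identically. Your node evaluations (namely $p_{\psi+\pi,\phi+\pi}=1$ at $-e^{i\psi}w_j = e^{i(\psi+\pi)}w_j$ and $p_{\psi+\pi,\phi+\pi}=0$ at $\beta - e^{i\phi}w_j = \beta + e^{i(\phi+\pi)}w_j$) are in fact the ones consistent with \eqref{eq:interp1}, whereas the paper's displayed computation lists these two values in the opposite order, an apparent typo that does not affect the conclusion.
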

\begin{proof}
Equation \eqref{12'} follows from noticing that for $j=0,\ldots,n-1$,
\begin{equation}
 \begin{aligned}
 p_{\psi+\pi,\phi+\pi}(\beta - (\beta + e^{i\psi} w_j) ) &= 
 p_{\psi+\pi,\phi+\pi}(e^{i(\psi+\pi)} w_j) = 0,
 ~\text{and}\\
 p_{\psi+\pi,\phi+\pi}(\beta - e^{i\phi}w_j) &= 
 p_{\psi+\pi,\phi+\pi}(\beta + e^{i(\phi+\pi)}w_j)=1.
 \end{aligned}
\end{equation}
Hence the polynomial $p_{\phi,\psi}(z) + p_{\psi+\pi,\phi+\pi}(\beta - z) - 1$
must be identically zero because it is of degree $2n-1$ and has $2n$ roots
$\{e^{i\phi} w_j, \beta + e^{i\psi}w_j \}_{j=0}^{n-1}$.
\end{proof}

An actual polynomial $p_{\phi,\psi}$ is shown in
Figure~\ref{fig:pphipsi}(right). Unfortunately this polynomial is not a good
solution for problem \eqref{eq:approx2} as the regions where $p_{\phi,\psi}
\approx 1$ and $p_{\phi,\psi} \approx 0$ to within a certain tolerance
(say 1\%) are relatively small.
Changing $\phi$ and $\psi$ does not give a significant improvement. However these
polynomials are the building block for the ensemble average polynomial
solving \eqref{eq:approx2} that  we present next.

\subsection{The ensemble average polynomial}
\label{sec:her}

\begin{figure}[ht]
\begin{center}
\includegraphics[width=0.5\textwidth]{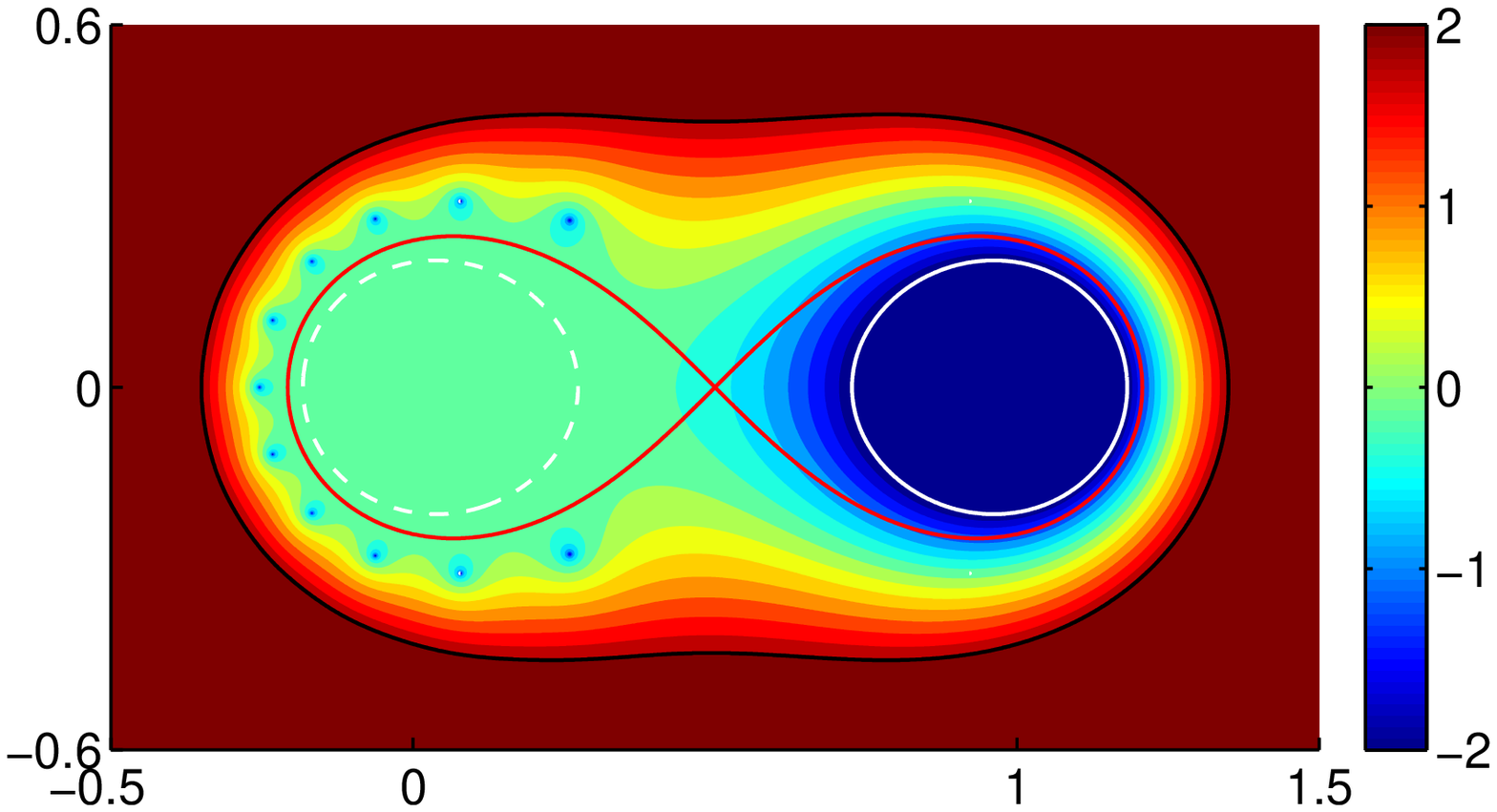}
\end{center}
\caption{The modulus of the ensemble average polynomial $\Ma{p}(z)$ for $n=12$ and
$\beta=1$. The device field used for cloaking is $\Ma{p}(1/z)$.  Within
1\% accuracy, the polynomial $\Ma{p}$ is close to one inside the dashed
white circle and close to zero inside the solid white circle. The
boundary of the convergence region $D_\beta$ of $\Ma{p}$ as $n\to\infty$
is the peanut shaped curve in red (see Theorem~\ref{Peanut-curve}). The
color scale is logarithmic from $0.01$ (dark blue) to $100$ (dark red),
with light green representing 1.} 
\label{fig:ensavg}
\end{figure}

\begin{figure}[t]
 \begin{center}
  \includegraphics[width=0.45\textwidth]{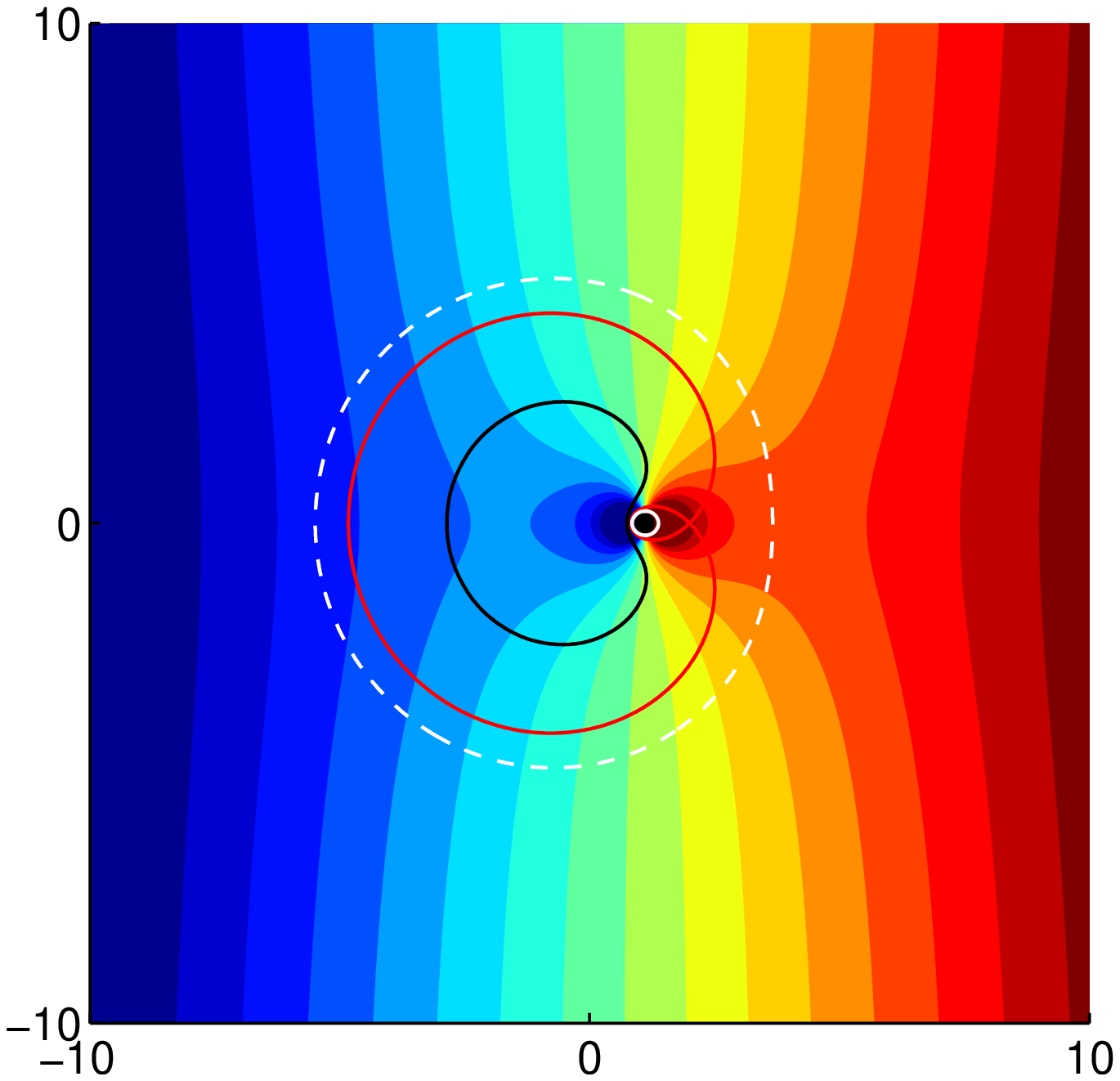}
  \hspace{1em}
  \includegraphics[width=0.45\textwidth]{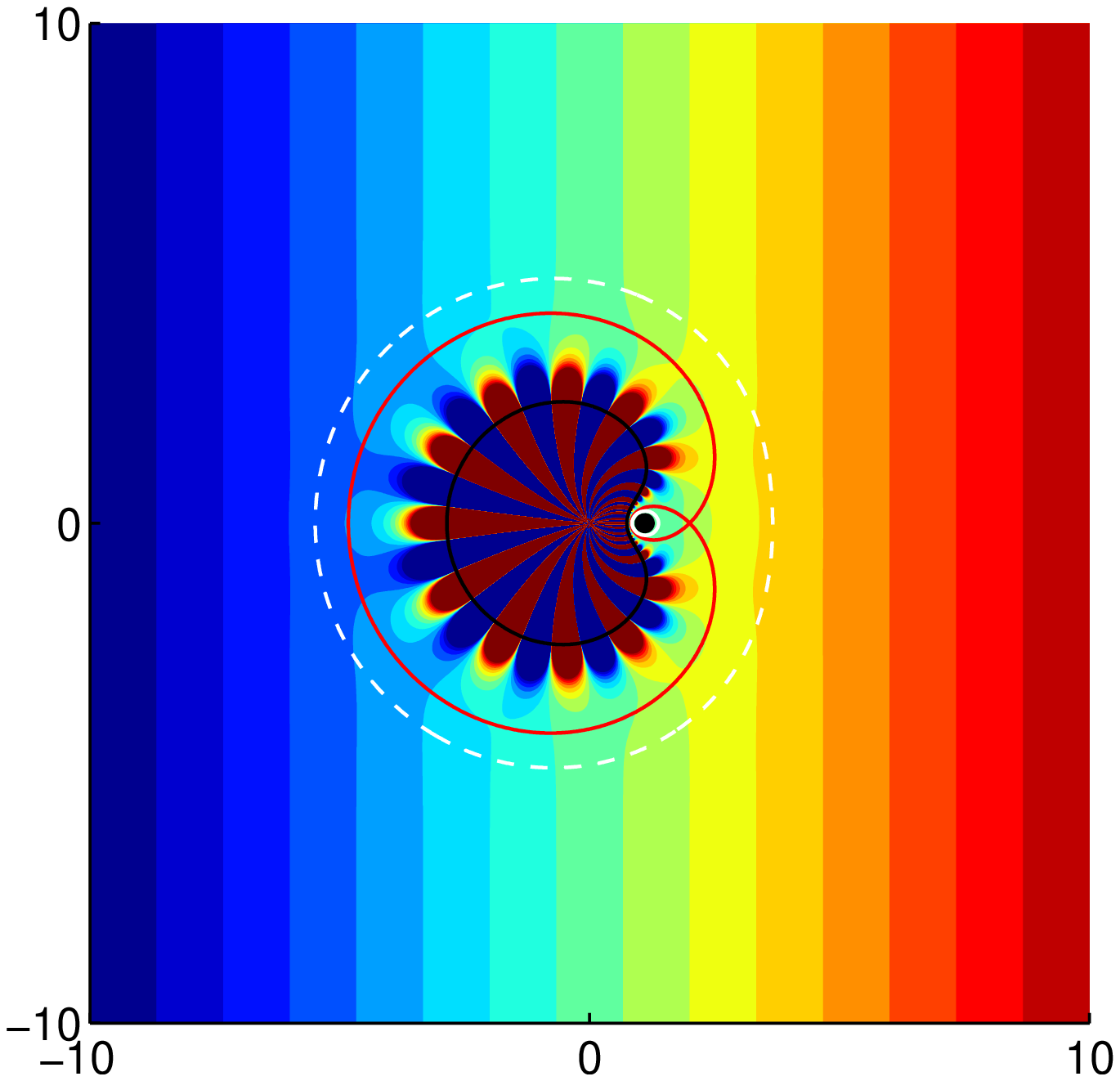}
 \end{center}
 \caption{Real part of the total field with the cloaking device active
 (right) and inactive (left), for an incident field $u_0(z) = z$ and
 $n=12$. The solid white, dashed white and red curves are the Kelvin
 transforms of the respective curves in Figure~\ref{fig:ensavg}. The
 solid black disk is an almost resonant scatterer with radius $r=0.1$,
 located at $z=1.05$ and with dielectric constant $\epsilon=-1+10^{-3}$,
 chosen to be plasmonic with a negative value close to $-1$ to amplify
 its effect.
 The solid black curve is the contour $|u|=100$. The color scale is
 linear from -10 (dark blue) to 10 (dark red).}
 \label{fig:cloak}
\end{figure}

In an effort to obtain a polynomial solution to problem
\eqref{eq:approx2} we calculate the ensemble average
of the polynomials $p_{\phi,\psi}$ with respect to the two
phase factors $\phi,\psi\in[0,2\pi]$, that is
\begin{equation}
 \Ma{p}(z) = \frac{1}{(2\pi)^2} \int_0^{2\pi} \int_0^{2\pi}
 p_{\phi,\psi} (z) d\phi d\psi.
 \label{eq:ensavg}
\end{equation}
We prove in Theorem~\ref{Peanut-curve}, using the next lemma, that indeed
$\Ma{p}$ is a solution for \eqref{eq:approx2}. An example of such polynomial
for $\beta=1$ and $n=12$ is given in Figure~\ref{fig:ensavg}.

\begin{lemma}
 The ensemble average polynomial defined in \eqref{eq:ensavg} has the
 expression
 \begin{equation}
  \Ma{p}(z) = \M{ 1-\frac{z}{\beta}}^n \, \sum_{j=0}^{n-1}
  \M{\frac{z}{\beta}}^j \frac{ (n+j-1)! } { j! (n-1)!}.
  \label{eq:ensavg:final}
 \end{equation}
\end{lemma}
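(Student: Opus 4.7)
The plan is to evaluate the double integral in \eqref{eq:ensavg} iteratively by inserting the explicit Lagrange representation \eqref{eq:pqsum}--\eqref{eq:q}, integrating first over $\psi$ and then over $\phi$. Both angular variables enter the integrand only through $e^{i\psi}$ and $e^{i\phi}$, so each average reduces to a unit-circle contour integral via the substitutions $t=e^{i\psi}$, $d\psi=dt/(it)$ and $\eta=e^{i\phi}w_m$, $d\phi=d\eta/(i\eta)$. The key simplification is that after the $\eta$-substitution the integrand no longer depends on $m$ (since $w_m^n=1$), so the sum over $m$ produces a factor of $n$ that cancels the $1/n$ in \eqref{eq:q}.

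Only the first bracket of \eqref{eq:q} depends on $\psi$. Setting $A=(z-\beta)^n$ and $B_m=(e^{i\phi}w_m-\beta)^n$, the substitution $t=e^{i\psi}$ turns $(A-e^{i\psi n})/(B_m-e^{i\psi n})$ into $(A-t^n)/(B_m-t^n)$, and the corresponding average becomes $\frac{1}{2\pi i}\oint_{|t|=1}\frac{A-t^n}{t(B_m-t^n)}\,dt$. When $|B_m|>1$, which holds uniformly in $\phi$ and $m$ as long as $\beta>2$, the only pole inside $|t|=1$ is the simple pole at $t=0$, with residue $A/B_m$. Hence
\begin{equation*}
\frac{1}{2\pi}\int_0^{2\pi} q_{\phi,\psi,m}(z)\,d\psi
= \frac{(z-\beta)^n}{(e^{i\phi}w_m-\beta)^n}\cdot\frac{z^n-e^{i\phi n}}{z-e^{i\phi}w_m}\cdot\frac{1}{n(w_m e^{i\phi})^{n-1}}.
\end{equation*}
Summing over $m$ and substituting $\eta=e^{i\phi}w_m$, so that $e^{i\phi n}=\eta^n$ and $(w_m e^{i\phi})^{n-1}=\eta^{n-1}$, yields (after the $n$-versus-$1/n$ cancellation)
\begin{equation*}
\Ma{p}(z)=(z-\beta)^n\cdot\frac{1}{2\pi i}\oint_{|\eta|=1}\frac{1}{(\eta-\beta)^n}\cdot\frac{z^n-\eta^n}{z-\eta}\cdot\frac{d\eta}{\eta^n}.
\end{equation*}

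To finish, I would expand $(z^n-\eta^n)/(z-\eta)=\sum_{k=0}^{n-1}z^{n-1-k}\eta^k$, so that each term becomes $z^{n-1-k}(\eta-\beta)^{-n}\eta^{k-n}$ with a pole only at $\eta=0$ (since $\beta>1$). The residue at $\eta=0$ is the coefficient of $\eta^{n-k-1}$ in the binomial series $(\eta-\beta)^{-n}=(-\beta)^{-n}\sum_{j\ge 0}\binom{n+j-1}{j}(\eta/\beta)^j$. Relabelling $j=n-1-k$ collapses the sum to $(-1)^n\beta^{-n}(z-\beta)^n\sum_{j=0}^{n-1}(z/\beta)^j\binom{n+j-1}{j}$, and the factor $(-1)^n(z-\beta)^n/\beta^n=(1-z/\beta)^n$ recovers \eqref{eq:ensavg:final}. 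The residue argument is valid under $\beta>2$, but since both sides are polynomials in $z$ whose coefficients are rational in $\beta$, the identity extends to all admissible $\beta$ by analytic continuation. I expect the only real difficulty to be careful bookkeeping of signs, powers of $\beta$, and the index shift in this last step.
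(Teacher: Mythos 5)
Your proposal is correct and follows essentially the same route as the paper: average over $\psi$ by a residue at the origin (valid since $\beta>2$ keeps the other poles outside the unit circle), use the $\phi$-average to collapse the $m$-sum into a single contour integral, expand $(z^n-\eta^n)/(z-\eta)$, and pick out residues at $\eta=0$. The only cosmetic difference is that you extract those residues via the binomial series for $(\eta-\beta)^{-n}$, whereas the paper writes them as $j$-th derivatives of $(w-\beta)^{-n}$ at $w=0$ before the final algebra.
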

\begin{proof}
 We first use the Cauchy residue theorem to compute the integral 
 \begin{equation}
  \begin{aligned}
  \frac{1}{2\pi} \int_0^{2\pi} \frac{ (z-\beta)^n - e^{i\psi n}}
  {(e^{i\phi} w_m - \beta)^n - e^{i\psi n}} d\psi &=
  \frac{1}{2i\pi} \int_{|w|=1} \frac{(z-\beta)^n - w^n}{(e^{i\phi} w_m -
  \beta)^n - w^n} \frac{dw}{w}\\
  &= \frac{(z-\beta)^n}{(e^{i\phi} w_m - \beta)^n},
  \end{aligned}
  \label{eq:int1}
 \end{equation}
 since the integrand has a single simple pole at $w=0$ in the disk
 $|w|<1$. Then by plugging \eqref{eq:int1} into the expression for
 $q_{\phi,\psi,m}$ we get that
 \begin{equation}
  \frac{1}{2\pi}\int_0^{2\pi} q_{\phi,\psi,m}(z) d\psi = 
  \frac{(z-\beta)^n}{(e^{i\phi} w_m - \beta)^n} \frac{z^n - e^{i\phi n}}{z-e^{i\phi} w_m} 
  \frac{1}{n(w_m e^{i\phi})^{n-1}}.
  \label{eq:int2}
 \end{equation}
 Recalling that $p_{\phi,\psi}$ is the sum \eqref{eq:pqsum} of $q_{\phi,\psi,m}$
 we can write
 \begin{equation}
  \Ma{p}(z) = \frac{1}{2\pi} \sum_{m=0}^{n-1} \frac{(z-\beta)^n}{(e^{i\phi} w_m - \beta)^n} \frac{z^n - e^{i\phi n}}{z-e^{i\phi} w_m} 
  \frac{1}{n(w_m e^{i\phi})^{n-1}}.
  \label{eq:int3}
 \end{equation}
 Now all $n$ terms in the previous sum are identical, therefore
 \begin{equation}
  \begin{aligned}
  \Ma{p}(z) &= \frac{1}{2\pi} \int_0^{2\pi} \frac{ (z-\beta)^n}
  {(e^{i\phi} - \beta)^n} \frac{ z^n - e^{i\phi n}}{z - e^{i\phi}}
  \frac{1}{e^{i\phi(n-1)}}\\
  & = \frac{1}{2i\pi} \int_{|w|=1} (z-\beta)^n
  \frac{z^n-w^n}{(z-w)(w-\beta)^n} \frac{dw}{w^n}\\
  & = \frac{(z-\beta)^n}{2i\pi} \int_{|w|=1} \M{ \frac{z^n}{w^n} -1}
  \frac{1}{(z-w)(w-\beta)^n} dw\\
  & = \frac{(z-\beta)^n}{2i\pi} \int_{|w|=1} \sum_{j=0}^{n-1}
  \frac{z^j}{w^{j+1} (w-\beta)^n} dw\\
  & = (z-\beta)^n \sum_{j=0}^{n-1} \frac{z^j}{j!} \frac{d^j}{dw^j} \Mb{
  \frac{1}{(w-\beta)^n} }_{w=0},
  \end{aligned}
  \label{eq:int4}
 \end{equation}
 where we used Cauchy's theorem in the last equality of \eqref{eq:int4}.
 The desired expression \eqref{eq:ensavg:final} follows by
 straightforward algebraic manipulations of \eqref{eq:int4}.
\end{proof}

\begin{remark}
 \label{rem-Hermite}
 By using elementary algebraic manipulations and \eqref{eq:int4}, it is
 possible to show that $\Ma{p}$  is the Hermite interpolation polynomial
 \cite{Stoer:2002:INA} of degree $2n-1$ that is uniquely defined by the
 $2n$ interpolation conditions
 \begin{equation}
  \Ma{p}(0) = 1, ~\Ma{p}(\beta) = 0,
  ~\text{and}~
  \Ma{p}^{(j)}(0) = \Ma{p}^{(j)}(\beta) = 0,
  ~\text{for $j=1,\ldots,n-1$.}
  \label{eq:herm}
 \end{equation}
\end{remark}

Notice that the ensemble average polynomial inherits the symmetry
property \eqref{12'} for $p_{\phi,\psi}$, that is
\begin{equation}
 \Ma{p}(z) + \Ma{p}(\beta-z) = 1.
 \label{eq:symm2}
\end{equation}
This symmetry property means that by design, the polynomial gives as
good an approximation to one near the origin as the approximation to
zero near $\beta$.

\subsection{Asymptotics of the ensemble average polynomial}
\label{sec:conv}
We now study the behavior of the polynomial $\Ma{p}$ (defined in
\eqref{eq:ensavg}) as $n \to \infty$. The following result shows that
the polynomial $\Ma{p}$ solves the problem \eqref{eq:approx2}, and gives
limits to the size of the cloaked region.

\begin{theorem}
\label{Peanut-curve} 
The ensemble average polynomial $\Ma{p}$ can be written as
\begin{equation}
 \Ma{p} = \frac{1}{2} + \sum_{k=0}^{n-1} \frac{(2k)!}{(k!)^2}
 \M{ \frac{z}{\beta} \M{1-\frac{z}{\beta}}}^k
 \M{ \frac{1}{2} - \frac{z}{\beta}}.
 \label{eq:ensavg:conv}
\end{equation}
The polynomial $\Ma{p}(z)$ converges as $n\to \infty$ if and only if $z$
belongs to the convergence region
\begin{equation}
 D_\beta = \Mcb{z \in \complex, ~ |z^2 - \beta z| < \frac{\beta^2}{4}}.
  \label{eq:dbeta}
\end{equation}
The convergence is uniform on compact subsets of $D_\beta$ to the
function 
\begin{equation}
 \chi (z) = \begin{cases} 1  & \text{if}\;\;\Re(z) <
 \beta/2, \\ 0 & \text{otherwise.} \end{cases}
 \label{eq:chi}
\end{equation}
For large enough $n$, the polynomial $\Ma{p}$ 
solves \eqref{eq:approx2} if and only if
\begin{equation}
 \frac{1}{R} < \frac{\beta}{2\sqrt{2}+2}
 ~~~\text{and}~~~
 \alpha < \frac{\beta}{2\sqrt{2}+2}.
 \label{eq:constraints}
\end{equation}
\end{theorem}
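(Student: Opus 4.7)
}

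My plan has three stages: (i) establish the identity \eqref{eq:ensavg:conv}; (ii) read off the convergence region and limit from the classical generating function for central binomial coefficients; (iii) locate the largest disks around $0$ and $\beta$ fitting inside $D_\beta$. Throughout it is convenient to write $t = z/\beta$, so that $\Ma{p}$, viewed as a function of $t$, is by Remark~\ref{rem-Hermite} the unique polynomial of degree $2n-1$ satisfying $\Ma{p}(0)=1$, $\Ma{p}(1)=0$, and having all derivatives of order $1,\dots,n-1$ vanish at both $t=0$ and $t=1$.

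For step (i), I will \emph{not} transform \eqref{eq:ensavg:final} into \eqref{eq:ensavg:conv} by direct algebraic manipulation. Instead, I will verify that the right-hand side $R_n(t)=\tfrac12+(\tfrac12-t)\sum_{k=0}^{n-1}\binom{2k}{k}(t(1-t))^k$ satisfies the same Hermite interpolation data as $\Ma{p}(\beta t)$, which forces equality by uniqueness. The symmetry $R_n(t)+R_n(1-t)=1$ is manifest from the factor $(\tfrac12-t)$ and the symmetric argument $t(1-t)$, so it suffices to check the conditions at $t=0$. For $|t(1-t)|<1/4$, one has the closed form
\begin{equation*}
\sum_{k=0}^{\infty}\binom{2k}{k}\bigl(t(1-t)\bigr)^k=\frac{1}{\sqrt{1-4t(1-t)}}=\frac{1}{\sqrt{(1-2t)^{2}}}=\frac{1}{1-2t}
\end{equation*}
in a neighborhood of $t=0$ (choosing the branch with value $1$ at $t=0$), so that the tail of the series beyond degree $n-1$ is of order $(t(1-t))^n=O(t^n)$. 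Hence $R_n(t)-\bigl[\tfrac12+(\tfrac12-t)/(1-2t)\bigr]=(\tfrac12-t)\,O(t^n)=O(t^n)$, and since the bracketed quantity equals $1$, I obtain $R_n(t)-1=O(t^n)$. Thus $R_n$ has the prescribed value and vanishing derivatives at $t=0$, and by the $t\mapsto 1-t$ symmetry also at $t=1$; uniqueness of the Hermite interpolant gives \eqref{eq:ensavg:conv}.

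For step (ii), the partial sums $\sum_{k=0}^{n-1}\binom{2k}{k}s^k$ converge as $n\to\infty$ if and only if $|s|<1/4$, and the convergence is uniform on compact subsets of the open disk $\{|s|<1/4\}$. Applying this with $s=(z/\beta)(1-z/\beta)$ yields pointwise convergence exactly on $D_\beta$ and uniform convergence on compacta. The limit is $\tfrac12+(\tfrac12-z/\beta)/\sqrt{(1-2z/\beta)^{2}}$, and the value of the square root is determined by analytic continuation of the branch that equals $1$ at $z=0$. Since $D_\beta$ is the union of the two open lobes of the Cassini oval $|z(z-\beta)|<\beta^2/4$ joined only at the self-intersection point $z=\beta/2$ (which is excluded), each lobe is simply connected and the branch is unambiguous. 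In the left lobe (which contains $0$) continuity forces $\sqrt{(1-2z/\beta)^{2}}=1-2z/\beta$, yielding limit $1$, while in the right lobe (which contains $\beta$) one gets the opposite sign and limit $0$; together this is precisely $\chi$ of \eqref{eq:chi}.

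Step (iii) is a geometric computation. Because $\overline{B(0,1/R)}$ and $\overline{B(\Bc^*,\alpha)}$ must be compact subsets of $D_\beta$ (and the half-plane conditions $\Re z<\beta/2$, $\Re z>\beta/2$ are then automatic for disks around $0$ and $\beta$), I need the sharp radius for a disk centered at $0$ to fit inside $D_\beta$. On the circle $|z|=r$, the function $|z(z-\beta)|=r|z-\beta|$ is maximized when $z$ is diametrically opposite $\beta$, i.e.\ $z=-r$, giving value $r(r+\beta)$. The disk lies inside $D_\beta$ precisely when $r(r+\beta)<\beta^2/4$; solving the quadratic $r^2+\beta r-\beta^2/4=0$ yields the positive root $r=\beta(\sqrt 2-1)/2=\beta/(2\sqrt 2+2)$. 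The analogous statement for the disk centered at $\beta$ follows by the $z\mapsto\beta-z$ symmetry, producing \eqref{eq:constraints}. The necessity direction uses that if either disk meets the complement of $\overline{D_\beta}$, then at such points the partial sums diverge (by the standard radius-of-convergence argument), so no uniform bound can hold. The only step that required real thought is the branch identification in step (ii); everything else is a direct consequence of the central binomial generating function and the Hermite characterization of $\Ma{p}$.
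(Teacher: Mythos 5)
Your proposal is correct, but it reaches the theorem by a genuinely different route than the paper at two key points. For the identity \eqref{eq:ensavg:conv}, the paper manipulates the binomial coefficients in \eqref{eq:ensavg:final} to obtain the first--order recurrence $f_{n+1}(t)=f_n(t)-\binom{2n}{n}\bigl(t(1-t)\bigr)^n\bigl(t-\tfrac12\bigr)$ and solves it; you instead verify that the candidate $R_n(t)=\tfrac12+\bigl(\tfrac12-t\bigr)\sum_{k=0}^{n-1}\binom{2k}{k}\bigl(t(1-t)\bigr)^k$ reproduces the Hermite data \eqref{eq:herm} (flatness to order $n-1$ at $t=0$ via the generating function plus an $O(t^n)$ tail bound, and the data at $t=1$ by the built-in symmetry) and invoke uniqueness of the Hermite interpolant. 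Note that each argument leans exactly once on Remark~\ref{rem-Hermite}, which the paper states without detailed proof: you use it to get the identity, the paper uses it (with analyticity and the symmetry \eqref{eq:symm2}) to identify the limit $\chi$; your limit identification via $\sum_k\binom{2k}{k}s^k=(1-4s)^{-1/2}$ and a lobe-by-lobe sign determination is more explicit, though the phrase ``analytic continuation of the branch that equals $1$ at $z=0$'' should be repaired for the right lobe, which is a separate component of $D_\beta$ unreachable by continuation from $0$ --- fix the sign there by evaluating at $z=\beta$ or by \eqref{eq:symm2}, as the per-lobe constancy of the sign already gives you. For the geometry, you replace the paper's maximization of $f(\theta)=l^4+\beta^2l^2-2l^3\beta\cos\theta-\beta^4/16$ together with its convexity argument by the maximum-modulus observation $\sup_{|z|\le r}|z(z-\beta)|=r(r+\beta)$ (attained at $z=-r$) and solve the quadratic, transferring the result to the disk about $\beta$ by the symmetry $z\mapsto\beta-z$; this is shorter and yields the same sharp constant $\beta/(2\sqrt2+2)$. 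Finally, your treatment of the ``only if'' direction (divergence of the partial sums off $\overline{D_\beta}$) and of pointwise convergence exactly on $D_\beta$ is at the same level of precision as the paper's ratio-test argument: both gloss over the boundary $|z^2-\beta z|=\beta^2/4$, where the series actually still converges away from $z=\beta/2$, and over the borderline equality case in \eqref{eq:constraints}; this imprecision is inherited from the theorem statement itself and is not specific to your argument.
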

\begin{proof}
Consider the function 
\begin{equation}
f_n(t) = (1-t)^n \sum_{j=0}^{n-1} t^j {n+j-1 \choose j}
\label{22}
\end{equation}
where for any positive integers $m$ and $p$, ${m\choose
p}=\frac{m!}{p!(m-p)!}$. Note that, from \eqref{eq:ensavg:final} we have 
\begin{equation} 
 f_n(t)=\Ma{p}(\Gb t), \mbox { for all } t\in \complex.
\label{22'}
\end{equation}
Then for all $t\neq 1$ we obtain,
\begin{equation}
\begin{aligned}
\frac{f_{n+1}(t)}{(1-t)^{n+1}}&=\sum_{j=0}^{n}t^j{n+j\choose j}\\
 &=1+\sum_{j=1}^{n}t^j\left[{n+j-1\choose
j-1}+{n+j-1\choose j}\right]\\
&=\sum_{j=1}^{n}t^j{n+j-1\choose
j-1}+\left[1+\sum_{j=1}^{n}t^j{n+j-1\choose
j}\right]\\
&=\sum_{k=0}^{n-1}t^{k+1}{n+k\choose
k}+\sum_{j=0}^{n}t^j{n+j-1\choose j}\\
&=t\left[\frac{f_{n+1}(t)}{(1-t)^{n+1}}-t^n{2n\choose
n}\right]+\frac{f_{n}(t)}{(1-t)^n}+t^n{2n-1\choose n}.
\end{aligned}
\label{23}
\end{equation}
In the above equation we used the recurrence relation
\[
 {m\choose p}={m-1\choose p-1}+{m-1\choose p}, \mbox{ for any
integers } m,p>0.
\]
From \eq{23}, for any integer $n\geq 1$ we obtain, 
\begin{equation}
\begin{aligned}
f_{n+1}(t)&=f_n(t)-(1-t)^n t^{n+1}{2n\choose
n}+(1-t)^n t^n{2n-1\choose n}\\
&=f_n(t)-(1-t)^n t^n{2n\choose n}\M{t-\frac{1}{2}}, \;\;\mbox{
for all } t\neq 1.
\end{aligned}
\label{24}
\end{equation}
In \eq{24} we used the identity
\[
 {2n\choose n}=2{2n-1\choose n}, \mbox{ for every integer
}n\geq 1.
\]
From the first order linear recurrence \eq{24} we obtain, 
\begin{equation}
f_n(t)=
\frac{1}{2}+\sum_{k=0}^{n-1}(1-t)^{k}t^{k}{2k\choose
k}\left(\frac{1}{2}-t\right)
\label{27}
\end{equation}
and this is valid for all $n\geq 1$ and all $t\in\mathbb C$ (as \eq{27}
which was initially obtained for $t\neq 1$ checks also for $t=1$). The
final expression \eqref{eq:ensavg:conv} follows from substituting
$t=z/\Gb$ in \eq{27} and using \eq{22'}.

Notice that the polynomial $\Ma{p}$ is in fact the $n$-th order partial
sum of the following infinite sum,
\[
\frac{1}{2}+\sum_{k=0}^{\infty}(1-\frac{z}{\Gb})^{k}\left(\frac{z}{\Gb}\right)^{k}{2k\choose
k}\left(\frac{1}{2}-\frac{z}{\Gb}\right).
\]
By the ratio test this series converges uniformly on compacts subsets of
the region $D_\Gb$ (defined at \eq{eq:dbeta}) to a limit function
$\varphi$ and diverges in $\complex \setminus {\overline D_\Gb}$.
From the uniform convergence of $\Ma{p}$ we
deduce the analyticity of $\varphi$ in $D_\beta$ and by using the Taylor
expansion around the origin for $\varphi$, the Remark \ref{rem-Hermite},
and the symmetry property \eqref{eq:symm2} we obtain convergence to the
function \eqref{eq:chi} inside $D_\beta$.

We now study the convergence region $D_\beta$ in order to show that the
constraints \eqref{eq:constraints} are necessary and sufficient for
$\Ma{p}$ to solve \eqref{eq:approx2}. First notice that the definition
of the region $D_\beta$ and simple algebra reveal that
\begin{eqnarray}
 \frac{1}{R}< \frac{\Gb}{2\sqrt{2}+2} &\Leftrightarrow&
  \frac{1}{R} e^{-i\pi} \in (D_\Gb\cap\{z\in{\mathbb C},
 2Re(z)<\Gb\}),~\text{and}\label{41'}\\
 \Ga < \frac{\Gb}{2\sqrt{2}+2} &\Leftrightarrow& 
 \Gb+\Ga \in(D_\Gb\cap\{z\in{\mathbb C}, 2Re(z)>\Gb\}).\label{41}
\end{eqnarray}
Next we show that
\begin{eqnarray}
\frac{1}{R}< \frac{\Gb}{2\sqrt{2}+2} &\Leftrightarrow& 
 B(0,1/R)\Subset (D_\Gb\cap\{z\in{\mathbb C},\;
 2Re(z)<\Gb\}),~\text{and}\label{28''}\\
 \Ga < \frac{\Gb}{2\sqrt{2}+2} &\Leftrightarrow&
 B(\Bc^*,\alpha) \Subset (D_\Gb\cap\{z\in{\mathbb C},\;
2Re(z)>\Gb\}),\label{28} 
\end{eqnarray} 
where $\Subset$ is the classical symbol for compact inclusions. By using
the equivalences \eqref{41'} and \eqref{41} it is easy to check that the
inclusions in \eqref{28''} and \eqref{28} imply the constraints
\eqref{eq:constraints}.  To show the implication ($\Rightarrow$), we
first show that for any two positive real numbers $l,q$ with
$2\max\{l,q\}<\beta$, we have 
\begin{eqnarray}
 l e^{-\pi i}\in D_\Gb 
 &\Leftrightarrow& 
 \overline{B(0,l)} \Subset (D_\Gb\cap\{z\in\complex,
 2Re(z)<\Gb\})~\text{and}\label{30}\\
 \Gb+q \in D_\Gb 
 &\Leftrightarrow& 
 \overline{B(\Bc^*,q)} \Subset 
 (D_\Gb\cap\{z\in\complex, 2Re(z)>\Gb\}).
 \label{31}
\end{eqnarray}
Let us first show the equivalence \eqref{30}. The sufficiency
($\Leftarrow$) is immediate. For the other implication ($\Rightarrow$),
we can use the definition of $D_\Gb$ to show that for any
$\theta\in [-\pi, \pi]$ we have, 
\begin{eqnarray}
le^{i\theta}\in D_\Gb 
&\Leftrightarrow& 
|l^2e^{i\theta}-\Gb l|<\frac{\Gb^2}{4}\nonumber\\
&\Leftrightarrow& 
(l^2e^{i\theta}-\Gb l)(l^2e^{-i\theta}-\Gb l)<\frac{\Gb^4}{16}\nonumber\\
&\Leftrightarrow&
l^4+\Gb^2 l^2-2l^3\Gb\cos\theta-\frac{\Gb^4}{16}<0.
\label{33}
\end{eqnarray}
Since we assumed $l e^{-\pi i}\in D_\Gb$, equation \eq{33} immediately
implies that
\begin{equation}
l^4+\Gb^2 l^2+2l^3\Gb-\frac{\Gb^4}{16}<0.
\label{32} 
\end{equation} 
Consider the even function $f:[-\pi,\pi]\rightarrow \real$ defined by, 
\begin{equation}
f(\theta)= l^4+\Gb^2
l^2-2l^3\Gb\cos\theta-\frac{\Gb^4}{16}.
\label{34}
\end{equation}
Observe now that, because $l>0$, its derivative $f'(\theta)=2l^3\Gb
\sin\theta$ has the signs, 
\begin{equation}
f'(\theta) \geq 0 ~\text{for $\theta\in [0,\pi]$}
~~\text{and}~~~
f'(\theta) \leq 0 ~\text{for $\theta\in [-\pi, 0]$}.
\label{35}
\end{equation}
Note that from inequality \eq{32} and the definition \eq{34} of $f(\theta)$
one immediately obtains 
\begin{equation}
f(-\pi)=f(\pi)<0.
\label{36}
\end{equation}
Then the signs of $f'(\theta)$ in \eq{35} together with the particular
values of $f(\theta)$ in \eq{36} imply
\begin{equation}
f(\theta)<\max\{f(\pi),f(-\pi)\}<0,\mbox{ for all
}\theta\in[-\pi,\pi].
\label{37}
\end{equation}
Because of equivalence \eqref{33} we conclude from inequality \eqref{37}
that
\begin{equation}
le^{i\theta}\in D_\Gb,\mbox{ for any } \theta\in
[-\pi,\pi].
\label{38}
\end{equation}
From the conditions on $l$, we have that $2Re(le^{i\theta})\leq 2l<\Gb$
and by using this in \eq{38} we obtain 
\begin{equation}
le^{i\theta}\in (D_\Gb\cap\{z\in\complex,
2Re(z)<\Gb\}),\mbox{ for any } \theta\in [-\pi,\pi].
\label{39}
\end{equation}
Inclusion \eqref{39} together with the convexity of
$D_\Gb\cap\{z\in\complex, 2Re(z)<\Gb\}$ implies that 
\[
 {\overline B}_l(0)\Subset D_\Gb\cap\{z\in{\mathbb C},
 2Re(z)<\Gb\}.
\] 
This establishes the equivalence \eqref{30}. From the definition of
the set $D_\Gb$, by simple algebraic manipulation we obtain
\begin{equation}
\Gb+q \in D_\Gb \Leftrightarrow qe^{-\pi i}\in D_\Gb.
\label{40}
\end{equation}
Equivalence \eq{40} clearly implies that \eq{31} follows from \eq{30}
applied to $q$ instead of $l$. Finally, observing the fact that
the constraints \eq{eq:constraints} imply $2\max\{\frac{1}{R},\Ga\}<\Gb$ and using
equivalences \eq{41'}, \eq{41}, \eq{30} and \eq{31} for $\frac{1}{R}$
and $\Ga$ instead of $l$ and $q$ respectively, we obtain the desired
equivalences \eq{28''} and
\eq{28}. By using the uniform convergence of the polynomial $\Ma{p}$ to the
function $\chi(z)$ in $D_\beta$, and equivalences
\eq{28''} and \eq{28} we obtain that the constraints
\eqref{eq:constraints} are indeed necessary and sufficient for
convergence of $\Ma{p}$.
\end{proof}

\begin{remark} 
The expression \eqref{eq:ensavg:conv} of the ensemble
average polynomial could also be obtained by generalizing to
distributions a theorem by Ramharter \cite{Ramharter:1991:RHL} (which is
in turn a generalization of a result due to Berger and Tasche
\cite{Berger:1988:HLI}). To remain concise, we prefer to include a
direct proof.
\end{remark}

\section{Summary} 
For the Laplace equation we have shown the existence
of a device capable of cloaking a region exterior to the device,
assuming a priori knowledge of the incident field. The proof relies on a
non-constructive harmonic function approximation result. The theory does
not constrain the size and relative positions of the device and  cloaked
region, as long as they are bounded, disjoint and the complement of
their union is connected. Although the construction of such a cloaking
device is clearly not unique, we presented earlier in
\cite{Guevara:2009:AEC} a construction based on an explicit polynomial.
Here we rigorously justify this construction and show that the
constraints \eqref{eq:constraints} must be satisfied in order to have a
proper active exterior cloak. Because of the constraints
\eqref{eq:constraints}, the current strategy fails to cloak large
objects ($\Ga$ large) unless they are sufficiently far from the origin
($\Gb$ large enough). In \cite{Guevara:2011:TEA} (see Conjecture 1), we
present without proof, as a conjecture, an extension of
Theorem~\ref{Peanut-curve} which gives a wider choice of cloaks and that
is supported by numerical experiments.

\begin{acknowledgements}
The authors are grateful for support from the National Science
Foundation through grant DMS-070978 and express their gratitude to
Robert V. Kohn, Jeffrey Rauch and John Willis for insightful
suggestions. The paper was in part written while the authors were
visiting the Mathematical Sciences Research Institute for the Inverse
problems program during the Fall semester of 2010. FGV was partially
supported through the National Science Foundation grant DMS-0934664.
\end{acknowledgements}

\bibliographystyle{abbrvnat}
\bibliography{ecbib}

\end{document}